\documentclass[twocolumn]{revtex4-2}
\usepackage{amsfonts,amssymb,amsmath}
\usepackage{amsthm}
\usepackage[bookmarks]{hyperref}
\usepackage{physics}
\usepackage{xcolor}
\usepackage{graphicx}
\usepackage{duckuments}
\newcounter{counter}

\newtheorem{proposition}[counter]{Proposition}
\newtheorem{corollary}[counter]{Corollary}
\renewcommand{\epsilon}{\varepsilon}
\newcommand{\pt}[1]{{#1}^{T_B}}
\newcommand{\PPT}{\mathrm{PPT}}
\newcommand{\SN}{\mathrm{SN}}

\begin{document}
\title{Catalytic advantage in asymptotic entanglement manipulation}
\author{Ray Ganardi}
\email{ray@ganardi.xyz}
\affiliation{%
	Scuola Normale Superiore,
	Piazza dei Cavalieri 7,
	56126 Pisa, Italy
}

\begin{abstract}
	Entanglement is a key quantum resource in various quantum protocols, with a rich set of laws governing its manipulation.
	In this context, catalysis refers to the possibility of an auxiliary state that enables a previously forbidden manipulation, while being completely returned at the end.
	While the catalytic setting has been thoroughly examined in the single-copy regime, much less is known in the asymptotically many copy regime.
	In this work, we focus on the entanglement cost of preparing asymptotically many copies of a given state exactly.
	We show that catalysis can significantly lower the exact entanglement cost by constructing an explicit catalytic protocol.
	Additionally, these findings generalize readily to other resource theories, showing a general catalytic advantage in the resource dilution task.
\end{abstract}

\maketitle

Entanglement is a key ingredient in quantum protocols, enabling quantum advantage in various settings ranging from communication, sensing, to computing applications~\cite{Horodecki_2009a}.
A basic prerequisite to utilize entanglement is its generation, which is a common bottleneck in all leading platforms.
For example, generating entangled photon pairs is commonly done through the spontaneous parametric down-conversion process, whose count rates are much lower than other processes in photonic protocols.
Similarly, the error rate of two-qubit gates on trapped ions quantum computers are often orders of magnitude higher than that of one-qubit gates.
The efficient manipulation of entanglement is of paramount importance, which motivates a deeper examination of entanglement manipulation protocols and understand its fundamental limits.

This is the aim of entanglement theory.
The starting point is the framework of local operations and classical communication, also known as LOCC\@.
With these operations, we can prepare any separable state for free.
However, given access to enough entangled states, we can go beyond the LOCC restriction and implement any global operations.
In this way, entangled states acts as a quantum resource, and the manipulation of entanglement reduces to studying transformations between entangled states.

The laws governing entanglement manipulation are complex and exhibit various curious phenomena.
A particularly interesting one is known as catalysis~\cite{Jonathan_1999,Datta_2023a,Lipka-Bartosik_2024a};
there exist states $\rho, \sigma$ such that no LOCC can transform $\rho$ into $\sigma$, but with the addition of a catalyst $\tau$, $\rho \otimes \tau$ can be transformed to $\sigma \otimes \tau$.
While initially defined in the single-copy regime, this catalytic setting was shown to be deeply linked to the asymptotically many-copy regime~\cite{Duan_2005,Ganardi_2024,Müller_2018,Kondra_2021,Shiraishi_2021,Wilming_2021,Lipka-Bartosik_2021}.
In particular, if $\rho$ can be transformed to $\sigma$ asymptotically with at least unit rate, then there exists a catalytic protocol that transforms $\rho$ to $\sigma$.
Catalysis allow us to bridge the gap between the single-copy and asymptotic regime.

Together, these results show that the catalytic setting is at least as powerful as the asymptotic one.
However, it is not clear whether catalysis can be more powerful.
We can formalize the question as follows: can catalysis display some advantage in the asymptotic regime?
Despite the extensive literature on asymptotic protocols, not much is known about catalysis in asymptotic regime.
Previously, only one example of catalytic advantage is known;
in the resource theory of asymmetry, catalysis can increase the distillable coherence from zero to infinity~\cite{Kondra_2024,Shiraishi_2024}.
However, this is due to the peculiar fact that in the resource theory of asymmetry, catalysis can already increase the coherence in a state in the single-copy regime~\cite{Ding_2021}.
Notably, this advantage does not extend to other resource theories, e.g.\ entanglement~\cite{Ganardi_2024,Lami_2024}.

Resolving the question of catalytic advantage in the asymptotic setting is important in two respects:
a negative answer would definitively characterize the power of catalytic transformations.
However, a positive answer will open a new path to obtaining higher rates in key asymptotic tasks.

In this article, we will show a catalytic advantage in the asymptotic setting.
In particular, we will consider the task of exact entanglement dilution, whose figure of merit is the exact entanglement cost.
We construct an example that show that catalysis can lower the exact entanglement cost by giving an explicit catalytic protocol.
On a technical level, we show a close connection between catalytic advantage and non-convexity of the exact entanglement cost.
Furthermore, we show that this phenomenon is not restricted to entanglement theory;
other resource theories also display a catalytic advantage in the resource dilution task.
These results show that even in the asymptotic regime, catalysis can still bring some tangible benefits.

\section{Preliminaries}

In entanglement theory, a Bell pair $\ket{\Phi} = \frac{1}{\sqrt{2}} \pqty{\ket{00} + \ket{11}}$ serves as a standard unit of entanglement.
This is because with enough copies of Bell pairs, we can prepare any state by LOCC, for example by utilizing the teleportation protocol.
This task gives rise to the entanglement measure known as \emph{exact entanglement cost}.
For a given state $\rho$, the exact entanglement cost is defined as follows:
a rate $r$ is called achievable if for any $\delta > 0$, there exists $m, n$, and an LOCC protocol $\Lambda$ such that
\begin{align}
	\Lambda(\Phi^{\otimes m})
		&=
		\rho^{\otimes n},
		\\
		\frac{m}{n}
		&\leq
		r + \delta.
\end{align}
The infimum over all achievable $r$ is known as the exact LOCC entanglement cost $E_c^{0} (\rho)$.

In the catalytic setting, we are given access to an additional system, i.e.\ the catalyst, that must be returned in the same state at the end.
A rate $r$ is called catalytically achievable if for any $\delta > 0$, there exists $m, n$, a catalyst $\tau$, and an LOCC protocol $\Lambda$ such that
\begin{align}
	\Tr_C \Lambda(\Phi^{\otimes m}_S \otimes \tau_C)
		&=
		\rho^{\otimes n}_S,
		\\
		\Tr_S \Lambda(\Phi^{\otimes m}_S \otimes \tau_C)
		&=
		\tau_C,
		\\
		\frac{m}{n}
		&\leq
		r + \delta.
\end{align}
The infimum over all catalytically achievable $r$ is known as the catalytic exact LOCC entanglement cost $E_{c,c}^{0} (\rho)$.

Due to the difficulty of handling LOCC directly, we often study modified versions of entanglement theory.
The theory of PPT entanglement starts with the following simple observation:
a separable state remains positive under partial transpose, i.e.\ it is a PPT state~\cite{Peres_1996}.
This was further developed by Rains~\cite{Rains_1999}, defining the set of PPT operations to obtain an upper bound to distillable entanglement under LOCC\@.
We say that a completely positive trace-preserving map $\Lambda$ is a PPT operation if $T_B \circ \Lambda \circ T_B$ is a completely positive map, where $T_B$ is the partial transpose on the computational basis of system $B$.
It can be shown that this condition is equivalent to:
(1) for any PPT state $\sigma_{A_1 A_2 B_1 B_2}$, the state $\pqty{\mathbf{1}_{A_1 B_1} \otimes \Lambda_{A_2 B_2}} \pqty{\sigma_{A_1 A_2 B_1 B_2}}$ is PPT, and
(2) the Choi state of $\Lambda$ is a PPT state.
We define the exact PPT entanglement cost and the catalytic exact PPT entanglement cost by requiring $\Lambda$ to be a PPT operation instead of LOCC\@.
We can verify that any LOCC protocol is a PPT operation, thus the (catalytic) exact PPT entanglement cost is always a lower bound to the LOCC variant.
While the theory of PPT entanglement was initially defined as an outer approximation of LOCC entanglement, it displays many of the characteristic features of LOCC entanglement.
For example, similarly to LOCC entanglement, PPT entanglement theory is irreversible~\cite{Wang_2017}.

Our primary reason to consider PPT theory instead of LOCC in this article is the following characterization of exact cost for states $\rho$ with positive binegativity $\pt{\abs{\pt{\rho}}} \geq 0$~\cite{Audenaert_2003}:
\begin{align}
	E_c^0 (\rho)
	=
	L_N(\rho),
\end{align}
where $L_N(\rho) = \log_2 \norm{\pt{\rho}}_1$ is the logarithmic negativity~\cite{Plenio_2005}.
The set of states with positive binegativity includes many entangled states that are commonly considered in the literature, such as the Werner states and all two-qubit states.
Furthermore, it is a convex set, closed under tensor products, and it includes the set of PPT states.

\section{Asymptotic catalytic advantage}

Since the discovery of Duan et al.\ that single-shot catalytic transformations are at least as powerful as the asymptotically many-copy regime~\cite{Duan_2005}, an obvious question is whether this advantage persists in the many-copy regime.
To our knowledge, the earliest catalytic advantage related to the asymptotic regime was found in Ref.~\cite{Feng_2006}, where a transformation between two bipartite pure states is possible catalytically in the single-shot regime, but not possible with $n$-to-$n$ multi-copy transformations.
However, this is a very mild advantage that will not show up in transformation rates;
indeed, it is possible for $\rho$ to be transformed to $\sigma$ with unit rate, even though $\rho^{\otimes n}$ cannot be transformed to $\sigma^{\otimes n}$.
This is because when computing transformation rates, any sublinear terms can be ignored.
In fact, the results of Ref.~\cite{Kondra_2021} ensures that there are no catalytic advantage in asymptotic transformations between bipartite pure states.

Our results will show that the situation can change dramatically for transformations between mixed states.
We can see this as a consequence of the existence of different regimes of catalytic transformations in mixed states.
When the final state of the catalyst must be uncorrelated to the system (also called strict catalysis), the additivity of logarithmic negativity prevents us from using Proposition~\ref{proposition:cost-upper-bound} to find examples of catalytic advantage.
In contrast, when we allow some system-catalyst correlation in the final state (i.e. correlated catalysis), Proposition~\ref{proposition:cost-upper-bound} shows examples of catalytic advantage.

Let us start with an upper bound to the catalytic cost.
Suppose we have a state $\rho \in \mathcal{B}(\mathcal{H})$.
We say that a state $\mu \in \mathcal{B}(\mathcal{H}^{\otimes n})$ is an $n$-copy broadcast of $\rho$ if for all $i = 1, \ldots, n$, we have $\Tr_{\overline{i}} \mu = \rho$~\cite{Piani_2009b}, where $\Tr_{\overline{i}}: \mathcal{B}(\mathcal{H}^{\otimes n}) \to \mathcal{B}(\mathcal{H})$ denotes partial trace on all but the $i$th subsystem.
In other words, all the marginals of $\mu$ is in the state $\rho$.
For example, $\rho^{\otimes n}$ is an $n$-copy broadcast of $\rho$.
However, in general there may be correlations between the different copies of $\rho$,
e.g. the state $\ket{\psi} = \sum_i \sqrt{p_i} \ket{ii}$ is a $2$-copy broadcast of $\rho = \sum_i p_i \ketbra{i}$.
We denote the set of all $n$-copy broadcasts of $\rho$ as $\mathfrak{B}_n(\rho)$.

\begin{proposition}\label{proposition:cost-upper-bound}
	For any state $\rho$, we have
	\begin{align}
		E_{c, c}^{0} (\rho) \leq \inf_{\mu \in \mathfrak{B}_2(\rho)} \frac{1}{2} E_c^{0} (\mu).
	\end{align}
\end{proposition}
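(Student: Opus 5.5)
The plan is to build an explicit catalytic protocol in which the catalyst stores independent copies of $\rho$. At each round we make correlated pairs of $\rho$ by preparing copies of $\mu$, and the correlations are shuffled into the system--catalyst cut, where the definition of $E_{c,c}^{0}$ permits them. Fix any $\mu\in\mathfrak{B}_2(\rho)$ acting on $\mathcal{H}_1\otimes\mathcal{H}_2$, each factor a copy of the bipartite space of $\rho$ and split locally between Alice and Bob. It suffices to show that every rate $r>\frac12 E_c^0(\mu)$ is catalytically achievable. The claimed bound then follows by letting $r\downarrow\frac12E_c^0(\mu)$ and taking the infimum over $\mu$.

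\emph{Step 1 (non-catalytic preparation of $\mu$).} Fix $\delta>0$. By the definition of $E_c^0(\mu)$, there exist $m,k$ and an LOCC map $\Lambda_0$ such that
\begin{align}
	\Lambda_0(\Phi^{\otimes m}) = \mu^{\otimes k},
	\qquad
	\frac{m}{k}\leq E_c^0(\mu)+2\delta .
\end{align}
Write the output as registers $X=X_1\cdots X_k$, holding the first halves of the $k$ copies of $\mu$, and $Y=Y_1\cdots Y_k$, holding the second halves. Because $\mu$ is a broadcast, $\Tr_{Y}\mu^{\otimes k}=\rho^{\otimes k}$ and $\Tr_X\mu^{\otimes k}=\rho^{\otimes k}$.

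\emph{Step 2 (catalyst and protocol).} Take the catalyst $\tau_C=\rho^{\otimes k}$ on a register $C$ of the same shape as $Y$. The protocol first applies $\Lambda_0$ to $\Phi^{\otimes m}$, producing $\mu^{\otimes k}_{XY}$, and leaves $C$ untouched. It then relabels registers: the output system is $S:=XC$, and the new catalyst register is $Y$. This relabeling is a local permutation of subsystems on each side, hence LOCC (and PPT). The overall map $\Lambda$ is therefore LOCC.

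\emph{Step 3 (verification and rate).} The final joint state is $\mu^{\otimes k}_{XY}\otimes\rho^{\otimes k}_C$. On the system, the marginal is $\Tr_Y\mu^{\otimes k}\otimes\rho^{\otimes k}=\rho^{\otimes k}\otimes\rho^{\otimes k}=\rho^{\otimes 2k}$. This is exact and fully product, precisely because the old catalyst is independent of $X$. On the catalyst, the marginal is $\Tr_X\mu^{\otimes k}=\rho^{\otimes k}=\tau$. The system and catalyst end up correlated through $\mu$, but the definition of catalytic achievability constrains only the two marginals, so this is allowed. The number of output copies is $n=2k$, so the rate is $\frac{m}{n}=\frac{m}{2k}\leq\frac12E_c^0(\mu)+\delta$. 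Since $\delta$ is arbitrary, $\frac12 E_c^0(\mu)$ is catalytically achievable.

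The only delicate point is ensuring that the system marginal is exactly the product $\rho^{\otimes n}$ and not merely a state with the correct single-copy marginals. The protocol handles this by never placing both halves of any $\mu$ into the system: the intra-$\mu$ correlation always straddles the system--catalyst cut. The remaining issue is ordering the quantifiers correctly, with $m,k$ and hence the catalyst chosen after $\delta$, and this is routine.
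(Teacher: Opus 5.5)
Your proposal is correct and is essentially the paper's proof: the paper also takes the catalyst to be $\rho^{\otimes n}$, prepares $\mu^{\otimes n}$ from $\Phi^{\otimes m}$, and swaps the second halves of the $\mu$ copies with the catalyst register (your relabeling). The system then ends in $\rho^{\otimes 2n}$ and the catalyst in $\rho^{\otimes n}$ at rate $m/2n$, and your $2\delta$ slack is only a slightly more careful handling of the same quantifiers.
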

\begin{proof}
	First, let us show that for any $\mu \in \mathfrak{B}_2(\rho)$, there exists a catalytic protocol that prepares $\rho$ at cost $\frac{1}{2} E_c^{0} (\mu)$.
	Let us fix an arbitrary $\mu \in \mathfrak{B}_2(\rho)$.
	By definition of $E_c^{0}(\mu)$, for any $\delta > 0$, there exist $m, n, \Lambda$ such that
	\begin{align}
		\Lambda(\Phi^{\otimes m})
		&=
		\mu^{\otimes n},
		\\
		\frac{m}{n}
		&\leq
		E_c^{0} (\mu) + \delta.
	\end{align}

	Now, we are ready to construct the catalytic protocol (see Figure~\ref{figure:catalytic-protocol} for an illustration).
	Fix an arbitrary $\delta > 0$.
	First, we prepare the catalyst in the state $\rho^{\otimes n}_{C} $.
	Then, with the system in the state $\Phi^{\otimes m}_{S}$, we apply $\Lambda$ to transform the system state to $\mu^{\otimes n}_{SS'}$.
	Note that $\mu^{\otimes n}_{SS'}$ is composed of $2n$ subsystems.
	Finally, we perform a swap between subsystems $S'$ and $C$, obtaining $\mu^{\otimes n}_{SC} \otimes \rho^{\otimes n}_{S'}$.
	We can verify that the final reduced state of the catalyst is $\rho^{\otimes n}_C$ and the final reduced state of the system is $\rho^{\otimes n}_S \otimes \rho^{\otimes n}_{S'}$.
	This shows that there exists a valid catalytic dilution protocol with rate $\frac{m}{2n}$, i.e.\ $E_{c, c}^{0} (\rho) \leq m/2n \leq \pqty{E_c^{0} (\mu) + \delta}/2$.
	We obtain the claim by taking infimum over all $\delta > 0$ and $\mu \in \mathfrak{B}_2(\rho)$.
\end{proof}

\begin{figure}
	\includegraphics[width=\columnwidth]{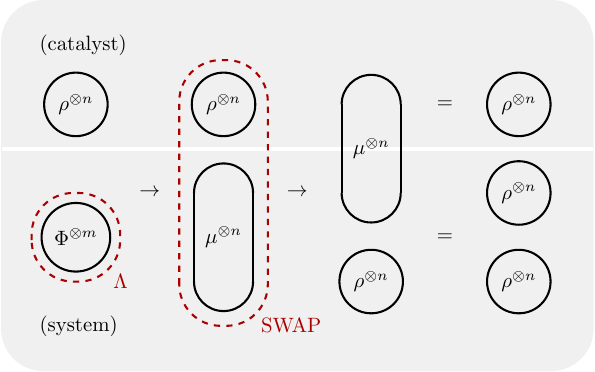}
	\caption{\label{figure:catalytic-protocol}
		An illustration of the catalytic protocol in Proposition~\ref{proposition:cost-upper-bound}.
	}
\end{figure}

With Proposition~\ref{proposition:cost-upper-bound}, it is easy to find examples of a state whose entanglement cost can be lowered by catalysis.
Let us focus on PPT entanglement theory for now, as it will be instructive.
Let us take the following Werner state
\begin{align}
	\rho
	&=
	\frac{1}{2} \Phi_d
	+ \frac{1}{2} \frac{\mathbf{1}}{d^2},
\end{align}
and its $2$-copy broadcast
\begin{align}
	\mu
	&=
	\frac{1}{2} \pqty{%
		\Phi_d \otimes \frac{\mathbf{1}}{d^2}
		+ \frac{\mathbf{1}}{d^2} \otimes \Phi_d
	},
\end{align}
where $\ket{\Phi_d} = \sum_{i < d} \frac{1}{\sqrt{d}} \ket{ii}$.
It is easy to verify that both $\rho$ and $\mu$ has positive binegativity, and thus its exact PPT entanglement cost is given by its logarithmic negativity $L_N(\rho) = \log_2 \norm{\pt{\rho}}_1$~\cite{Audenaert_2003}.
An explicit calculation shows that $L_N(\rho) =  L_N(\mu) = \log_2\pqty{\frac{d^2 + 1}{d}} - 1$, showing that $E_{c, c}^0(\rho) \leq E_{c}^0 (\mu)/2 < E_c^0 (\rho)$.
Thus, we have an example of a state whose catalytic exact PPT entanglement cost is strictly lower than its non-catalytic cost, showing a catalytic advantage in an asymptotic setting.

A closer analysis reveals that this catalytic advantage is a consequence of the remarkable fact that the PPT exact cost is not convex.
To see this, suppose we have a counterexample to midpoint-convexity $\rho = \frac{1}{2} \pqty{\sigma_0 + \sigma_1}$, such that $E_c^0 (\rho) > \pqty{E_c^0(\sigma_0) + E_c^0(\sigma_1)}/2$.
Then, we can verify that $\mu = (\sigma_0 \otimes \sigma_1 + \sigma_1 \otimes \sigma_0)/2$ is a $2$-copy broadcast of $\rho$, and furthermore
\begin{align}
	E_c^0 (\mu)
	\leq E_c^0 (\sigma_0) + E_c^0 (\sigma_1)
	< 2 E_c^0 (\rho).
\end{align}
Applying Proposition~\ref{proposition:cost-upper-bound}, we see that the catalytic cost of $\rho$ is strictly lower than its standard cost.
We have thus proved the following proposition:

\begin{proposition}\label{proposition:convex}
	Suppose $\rho = \frac{1}{2} \pqty{\sigma_0 + \sigma_1}$ is a counterexample to the midpoint-convexity of $E_c^0$, i.e.\ $
	E_c^0 \pqty{\frac{1}{2} \sigma_0 + \frac{1}{2} \sigma_1}
	>
	\frac{1}{2} E_c^0 (\sigma_0)
	+ \frac{1}{2} E_c^0 (\sigma_1)
	$.
	Then $
	E_{c,c}^0 \pqty{\rho}
	< E_c^0 \pqty{\rho}
	$.
\end{proposition}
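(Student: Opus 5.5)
The plan is to exhibit an explicit $2$-copy broadcast of $\rho$ whose exact cost is strictly below $2E_c^0(\rho)$, and then invoke Proposition~\ref{proposition:cost-upper-bound}. The natural candidate is the symmetrized mixture
\begin{align}
	\mu = \frac{1}{2}\pqty{\sigma_0 \otimes \sigma_1 + \sigma_1 \otimes \sigma_0}.
\end{align}

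\textbf{Step 1: $\mu$ is a broadcast.} Tracing out the second factor of $\mu$ gives $\frac{1}{2}(\sigma_0 + \sigma_1) = \rho$, and tracing out the first factor gives the same by symmetry. Hence $\mu \in \mathfrak{B}_2(\rho)$.

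\textbf{Step 2: bound $E_c^0(\mu)$.} I would show
\begin{align}
	E_c^0(\mu) \leq E_c^0(\sigma_0) + E_c^0(\sigma_1)
\end{align}
in two parts.

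First, for a product state, $E_c^0(\sigma_0\otimes\sigma_1) \le E_c^0(\sigma_0)+E_c^0(\sigma_1)$. Take protocols $\Lambda_0$ with $\Phi^{\otimes m_0}\mapsto\sigma_0^{\otimes n_0}$ and $\Lambda_1$ with $\Phi^{\otimes m_1}\mapsto\sigma_1^{\otimes n_1}$, each with rate within $\delta$ of optimal. Run $n_1$ copies of $\Lambda_0$ in parallel with $n_0$ copies of $\Lambda_1$. This uses $n_1m_0+n_0m_1$ Bell pairs and produces $(\sigma_0\otimes\sigma_1)^{\otimes n_0n_1}$. The resulting rate is $m_0/n_0+m_1/n_1$, and tensor products of LOCC (or PPT) maps remain LOCC (or PPT).

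Second, pass from $\sigma_0\otimes\sigma_1$ to $\mu$. The state $\mu$ is obtained from $\sigma_0\otimes\sigma_1$ by a free operation. Alice samples a fair coin, communicates it classically, and both parties apply the swap of the two copies conditioned on the coin. Each local swap is a local unitary, so this map is LOCC and hence also PPT. Applying it copywise to $(\sigma_0\otimes\sigma_1)^{\otimes N}$ yields $\mu^{\otimes N}$ at the same Bell-pair cost. So the exact cost is monotone under this map.

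\textbf{Step 3: combine.} Using the hypothesis of non-midpoint-convexity,
\begin{align}
	E_{c,c}^0(\rho) \le \frac{1}{2}E_c^0(\mu) \le \frac{1}{2}\pqty{E_c^0(\sigma_0)+E_c^0(\sigma_1)} < E_c^0(\rho),
\end{align}
where the first inequality is Proposition~\ref{proposition:cost-upper-bound} and the strict inequality is the assumption on $\rho$.

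The only step needing some care is the subadditivity in Step 2. One has to make sure the interleaving of the protocols is legitimate given the definition with infimum and $\delta$. Finite $m_i, n_i$ achieving each rate within $\delta$ always exist, so the parallel composition above works directly. I expect no serious obstacle.
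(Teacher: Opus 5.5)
Your proposal is correct and follows the paper's argument: the same symmetrized broadcast $\mu = \frac{1}{2}(\sigma_0\otimes\sigma_1+\sigma_1\otimes\sigma_0)$, the bound $E_c^0(\mu)\le E_c^0(\sigma_0)+E_c^0(\sigma_1)$, and then Proposition~\ref{proposition:cost-upper-bound} combined with the strict hypothesis. Your justification of that bound, by parallel composition of the two dilution protocols followed by a coin-controlled local swap, fills in the step the paper states without proof.
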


Incidentally, we have shown that the exact PPT entanglement cost cannot be strongly superadditive, since otherwise we will have $E_c^0 (\rho) \leq E_{c,c}^0 (\rho)$, i.e.\ there will not be any catalytic advantage.
By this, we mean that there exists a state $\rho_{12}$ such that $E_c^0 (\rho_{12}) < E_c^0 (\rho_1) + E_c^0 (\rho_2)$.
\begin{corollary}
	If $E_c^0$ is not midpoint-convex, then it is not strongly superadditive.
\end{corollary}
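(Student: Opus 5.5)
The plan is to argue by contraposition. Assume $E_c^0$ is strongly superadditive: for every bipartite state $\rho_{12}$ on two copies of the underlying space, with marginals $\rho_1$ and $\rho_2$ (each itself bipartite between Alice and Bob), we have $E_c^0(\rho_{12}) \geq E_c^0(\rho_1) + E_c^0(\rho_2)$. From this we will derive midpoint-convexity of $E_c^0$.

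Take any $\sigma_0,\sigma_1$ and set $\rho = \frac{1}{2}(\sigma_0+\sigma_1)$. As in the discussion preceding Proposition~\ref{proposition:convex}, the symmetrized state
\begin{align}
\mu = \tfrac{1}{2}\pqty{\sigma_0\otimes\sigma_1 + \sigma_1\otimes\sigma_0}
\end{align}
is a $2$-copy broadcast of $\rho$, since both marginals equal $\rho$. Strong superadditivity applied to $\mu$ gives $E_c^0(\mu) \geq 2E_c^0(\rho)$.

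For an upper bound on $E_c^0(\mu)$, the key observation is that $\mu$ is obtained from $\sigma_0\otimes\sigma_1$ by a free operation: flip a shared classical coin and, depending on the outcome, swap the two copies. This is LOCC, and hence also PPT, because each party swaps its own subsystems locally according to the shared coin. Composing it with an exact protocol for $(\sigma_0\otimes\sigma_1)^{\otimes n}$ gives
\begin{align}
E_c^0(\mu) \leq E_c^0(\sigma_0\otimes\sigma_1).
\end{align}
Here the coin is flipped independently on each of the $n$ blocks, so the output is exactly $\mu^{\otimes n}$.

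Next we need subadditivity, $E_c^0(\sigma_0\otimes\sigma_1)\leq E_c^0(\sigma_0)+E_c^0(\sigma_1)$. Run exact protocols for each state in parallel, using a common block length. If the protocols use $(m_0,n_0)$ and $(m_1,n_1)$, repeat them to reach block length $n_0n_1$; repetition keeps the ratios $m/n$ unchanged. The combined map is a tensor product of LOCC (or PPT) maps and outputs $\sigma_0^{\otimes n_0n_1}\otimes\sigma_1^{\otimes n_0n_1}$, which after reordering is $(\sigma_0\otimes\sigma_1)^{\otimes n_0n_1}$. Its rate is at most the sum of the two rates plus $2\delta$. This is the inequality the paper already uses before Proposition~\ref{proposition:convex}.

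Chaining the bounds gives
\begin{align}
2E_c^0(\rho)\leq E_c^0(\mu)\leq E_c^0(\sigma_0)+E_c^0(\sigma_1),
\end{align}
which is midpoint-convexity. By contraposition, if midpoint-convexity fails, strong superadditivity fails, and the state $\mu$ built from the violating pair is an explicit counterexample.

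Alternatively, the same chain can be routed through Proposition~\ref{proposition:convex} by showing that strong superadditivity forces $E_c^0 \leq E_{c,c}^0$. That route means applying superadditivity to the final system–catalyst state and requires a careful treatment of catalyst marginals. The direct route above avoids this, so it is the one I would follow.

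The main point needing care is the step $E_c^0(\mu)\leq E_c^0(\sigma_0\otimes\sigma_1)$. Exactness must survive the mixing, which it does because the random swaps act independently on each of the $n$ blocks. The swap must also be a valid operation in the theory under consideration: it is local and uses only shared randomness, so it is both LOCC and PPT.
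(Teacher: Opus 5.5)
Your proof is correct, but it is not the route the paper states. The paper justifies the corollary via catalysis. If $E_c^0$ were strongly superadditive, then applying it to the final system--catalyst state would give $E_c^0(\rho)\le E_{c,c}^0(\rho)$ for every $\rho$. That contradicts Proposition~\ref{proposition:convex}, which turns any violation of midpoint-convexity into a strict catalytic advantage.

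You skip the catalytic cost altogether. You apply strong superadditivity directly to the symmetrized broadcast $\mu=\tfrac12(\sigma_0\otimes\sigma_1+\sigma_1\otimes\sigma_0)$ and close the chain with the random local swap and subadditivity. This is essentially the inequality chain the paper uses to prove Proposition~\ref{proposition:convex}, with the catalytic step removed.

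\textbf{What your route buys.} It is more elementary and self-contained. It exhibits the violating state $\mu$ explicitly, and it justifies $E_c^0(\mu)\le E_c^0(\sigma_0)+E_c^0(\sigma_1)$, which the paper only asserts. It also needs nothing about catalyst marginals. The paper's route, by contrast, must invoke monotonicity, subadditivity to absorb $\tau$, finiteness of $E_c^0(\tau)$, and $E_c^0(\rho^{\otimes n})=nE_c^0(\rho)$ in order to cancel the catalyst.

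\textbf{What the paper's route buys.} It proves something stronger than the corollary: any catalytic advantage $E_{c,c}^0(\rho)<E_c^0(\rho)$, whatever its origin, is incompatible with strong superadditivity. That is the conceptual link between catalysis and superadditivity the author wants to emphasize.
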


Note that these arguments do not rely on the fact that we are studying the theory of PPT entanglement.
Let us now turn to the theory of LOCC entanglement.
The exact LOCC cost is given by the regularized log Schmidt number~\cite{Terhal_2000a}:
\begin{align*}
	E_c^0 (\rho)
	&=
	\lim_{n \to \infty}
	\frac{1}{n}
	\log{\SN(\rho^{\otimes n})},
	\\
	\SN(\rho)
	&=
	\inf_{\sum_i p_i \psi_i = \rho} \bqty{%
		\sup_{\psi_i} \rank(\Tr_B \psi_i)
	}.
\end{align*}
On the set of pure states, this is simply the logarithm of the Schmidt rank of the state.
Now, because Schmidt rank only takes integer values, the exact entanglement cost is not continuous on pure states.
Furthermore, for any state $\rho \in \mathcal{B}(\mathcal{H}_{AB})$, we have $E_c^0 (\rho) \leq \log_2 \min \Bqty{\dim{\mathcal{H}_A}, \dim{\mathcal{H}_B}}$, thus it is a bounded function.
These two observations imply that the exact LOCC cost is not midpoint-convex, through the following theorem of Jensen~\cite{Jensen_1906}:
a midpoint-convex function that is bounded on an interval is continuous on that interval.
Applying Proposition~\ref{proposition:convex}, we conclude that the exact LOCC cost must display a catalytic advantage.
This resolves the question on the catalytic behavior of the regularized log Schmidt number (also called asymptotic Schmidt characteristic), dating back to Ref.~\cite{Terhal_2000a}.

\begin{proposition}
	There exists a state $\rho$ such that its catalytic exact LOCC entanglement cost is strictly lower than its exact LOCC entanglement cost,
	i.e.\ $E_{c,c}^0 (\rho) < E_c^0 (\rho)$.
\end{proposition}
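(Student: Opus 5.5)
The plan is to produce an explicit counterexample to midpoint-convexity of the exact LOCC cost $E_c^0$ and then invoke Proposition~\ref{proposition:convex}. That proposition does not depend on which class of free operations is used, so it applies verbatim to LOCC. The Jensen argument sketched in the text would also suffice abstractly. However, it only tells us that convexity fails along \emph{some} segment, so I prefer a concrete pair of two-qubit states for which the inequality can be checked by hand.

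\textbf{Step 1 (the endpoints).} Take $\sigma_0 = \Phi$, the Bell pair, and $\sigma_1 = \ketbra{00}$, a product state. Clearly $E_c^0(\sigma_1) = 0$, since a product state is prepared by LOCC for free. For $\sigma_0$, one Bell pair per copy suffices, so $E_c^0(\Phi) \le 1$. Using the regularized log Schmidt number formula, and since $\Phi^{\otimes n}$ has Schmidt rank $2^n$, we get exactly $E_c^0(\Phi) = 1$. The only fact actually needed is the upper bound, which gives
\begin{align}
\tfrac{1}{2}E_c^0(\sigma_0) + \tfrac{1}{2}E_c^0(\sigma_1) \le \tfrac{1}{2}.
\end{align}

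\textbf{Step 2 (lower bound at the midpoint).} Let $\rho = \frac{1}{2}(\Phi + \ketbra{00})$. Every LOCC protocol is a PPT operation, so the exact LOCC cost is bounded below by the exact PPT cost. Since $\rho$ is a two-qubit state, it has positive binegativity, and therefore its exact PPT cost equals $L_N(\rho)$ by~\cite{Audenaert_2003}. It remains to compute $\pt{\rho}$:
\begin{align}
\pt{\rho} = \tfrac{1}{2}\left(\tfrac{1}{2}F + \ketbra{00}\right),
\end{align}
where $F$ is the swap operator. This operator is block diagonal. On $\ket{00}$ it has eigenvalue $3/4$, and on $\ket{11}$ it has eigenvalue $1/4$. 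On $\mathrm{span}\{\ket{01},\ket{10}\}$ it has eigenvalues $\pm 1/4$. Hence $\norm{\pt{\rho}}_1 = 3/2$ and $E_c^0(\rho) \ge \log_2(3/2) \approx 0.585 > 1/2$.

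\textbf{Step 3 (conclusion).} Steps 1 and 2 show that $\rho$ violates midpoint-convexity of the exact LOCC cost. Proposition~\ref{proposition:convex}, applied with the broadcast $\mu = \frac{1}{2}(\Phi\otimes\ketbra{00} + \ketbra{00}\otimes\Phi)$, then yields $E_{c,c}^0(\rho) < E_c^0(\rho)$. Concretely, $\mu$ can be prepared exactly from one Bell pair per copy, so $E_{c,c}^0(\rho) \le 1/2 < \log_2(3/2) \le E_c^0(\rho)$.

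The only step requiring care is Step 2, the lower bound on the exact LOCC cost of the mixed midpoint. The plan handles it by passing to PPT operations and using the logarithmic-negativity characterization valid under positive binegativity, which holds automatically for two qubits. Step 1's endpoint values are elementary.
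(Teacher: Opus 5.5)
Your proof is correct, and it takes a different route from the paper's.

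The paper's argument is non-constructive. The exact LOCC cost is the regularized log Schmidt number, which is bounded yet jumps on pure states because Schmidt rank is integer-valued. Jensen's theorem then rules out midpoint-convexity, and Proposition~\ref{proposition:convex} gives an advantage for some unspecified state.

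You instead exhibit a specific violation, $\rho=\frac12(\Phi+\ketbra{00})$. You compare an explicit lower bound $E_c^0(\rho)\ge\log_2(3/2)$ with an explicit catalytic upper bound $\frac12$. That upper bound comes from Proposition~\ref{proposition:cost-upper-bound} applied to your broadcast $\mu$, which is indeed prepared from a single Bell pair by a shared coin flip and local swaps. This gives you a concrete state and a quantitative gap, and it avoids both the Terhal--Horodecki formula and Jensen's theorem.

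It also sidesteps a subtlety in the Jensen route, which you describe as sufficing on its own. Jensen's theorem only gives continuity on open intervals. Pure states are extreme points, so they appear only as endpoints of segments. Hence a jump of the Schmidt rank along a curve of pure states does not by itself contradict midpoint-convexity; further work along an actual segment is needed. In exchange, the paper's route explains the non-convexity structurally rather than through the PPT relaxation.

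Step~2 can be simplified. You need neither positive binegativity nor the Audenaert--Plenio--Eisert characterization, only the bound $E_c^0\ge L_N$. This holds for every state, because $L_N$ is additive and does not increase under PPT operations: $T_B\circ\Lambda\circ T_B$ is trace preserving and positive, hence trace-norm contractive. So $\Lambda(\Phi^{\otimes m})=\rho^{\otimes n}$ forces $m\ge nL_N(\rho)$. In your example positive binegativity is immediate anyway, since $\abs{\rho^{T_B}}$ is diagonal in the product basis.
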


We can ask whether we can see the same catalytic advantage in exact distillation.
Here, we start with many copies of a state $\rho$ and aim to obtain as many copies of $\Phi$ as possible, with the optimal rate being the exact distillation rate.
The answer turns out to be no.
Let us start by defining the exact transformation rate $R^0(\rho \to \sigma)$.
A rate $r$ is called achievable when for any $\delta > 0$, there exists $m, n$ and an LOCC protocol $\Lambda$ such that
\begin{align}
	\Lambda(\rho^{\otimes n})
	&=
	\sigma^{\otimes m}
	\\
	\frac{m}{n}
	&\geq r - \delta.
\end{align}
The supremum over all achievable $r$ is the exact transformation rate $R^0(\rho \to \sigma)$.
From this definition, we can see that $E_c^0 (\rho) = R^0 (\Phi \to \rho)^{-1}$.
The exact distillation rate is the dual to this quantity, $E_d^0 (\rho) = R^0 (\rho \to \Phi)$.
We can prove the following analogue of Proposition~\ref{proposition:cost-upper-bound} for distillation:
\begin{align}
	E_{d, c}^0(\rho) \geq \sup_{\mu \in \mathfrak{B}_2(\Phi)} 2 R^0 (\rho \to \mu),
\end{align}
as a consequence of the more general bound $R_{c}^0 (\rho \to \sigma) \geq \sup_{\mu \in \mathfrak{B}_2(\sigma)} 2 R^0(\rho \to \mu)$.
However, due to the purity of $\Phi$, we can see that $\mathfrak{B}_2(\Phi)$ only contains a single element, $\Phi^{\otimes 2}$.
Combined with the fact that $R^0 (\rho \to \Phi^{\otimes 2}) = R^0 (\rho \to \Phi)/2$, we can only get the trivial bound $E_{d, c}^0(\rho) \geq E_{d}^0(\rho)$.
In fact, we can show that this inequality is actually an equality, by showing that the exact distillable entanglement is a strongly superadditive monotone.
Then, standard arguments imply that $E_{d, c}^0(\rho) \leq E_{d}^0(\rho)$.
Thus exact distillable entanglement cannot display any catalytic advantage, mirroring the situation in the approximate regime~\cite{Ganardi_2024,Lami_2024}.

The main reason for the drastic difference in catalytic advantage between cost and distillation is that in a distillation process, the target state is a pure state.
This precludes any correlations in the final state between the system and catalyst, prohibiting the protocol of Proposition~\ref{proposition:cost-upper-bound} to obtain any catalytic advantage.
We also see this phenomenon reflected in the fact that the exact cost for pure states cannot be lowered by catalysis.
To see this, note that for any pure state $\psi$ and any state $\tau$, $E_c^0 (\psi \otimes \tau) = E_c^0 (\psi) + E_c^0 (\tau)$, which can be seen from the explicitly known formula for exact entanglement cost.

\section{General resource theories}
Since our example of a catalytic advantage relies on the non-convexity of exact cost, we can look for similar phenomena in other resource theories.
Let us take the resource theory of thermodynamics~\cite{Janzing_2000,Horodecki_2013,Lostaglio_2019a,Nelly_Ng_2018}.
Here, the analogue of the exact entanglement cost is known as the exact work cost.
For semiclassical states, i.e.\ states that are diagonal in the energy eigenstates, it is given by the max free energy $W_c^0 (\rho) = D_{\max} \pqty{\rho \| \gamma}$~\cite{Horodecki_2013}, which is merely quasi-convex function, \emph{not} convex.
Here, $D_{\max} (\rho \| \sigma) = \inf\Bqty{\log s \,|\, \rho \leq s \sigma}$ is the max-relative entropy~\cite{Datta_2009}.
Thus, we can expect that the exact work cost also displays a catalytic advantage.

To give an explicit example, let us take a two-level system with Gibbs state $\gamma = (1-p) \ketbra{0} + p \ketbra{1}$, with $0 < p < 1/2$.
Then, we can verify that for $0 \leq q \leq 1$, we have
\begin{align}
	W_c^0 \pqty{\ketbra{0}}
	&= \log_2{\frac{1}{1-p}},
	\\
	W_c^0 \pqty{\gamma}
	&= 0,
	\\
	W_c^0 \pqty{(1-q) \ketbra{0} + q \gamma}
	&=
	\log_2{\frac{1-pq}{1-p}},
\end{align}
and thus $
W_c^0 \pqty{\frac{1}{2} \ketbra{0} + \frac{1}{2} \gamma}
>
(W_c^0 \pqty{\ketbra{0}} + W_c^0 \pqty{\gamma})/2
$.
A simple application of Proposition~\ref{proposition:convex} shows that for the state $
\rho = \frac{1}{2} \ketbra{0} + \frac{1}{2} \gamma
$, we have $
W_{c,c}^0 \pqty{\rho}
<
W_c^0 \pqty{\rho}
$.

This exemplifies a common scenario in general resource theories~\cite{Chitambar_2019}, where the resource cost to prepare a particular state $\rho$ exactly is given by $D_{\max}(\rho \| S) = \inf_{\sigma \in S} D_{\max} (\rho \| \sigma)$, where $S$ is the set of free states.
Now, suppose that $\sigma^{*} \in S$ achieves the optimal for $\rho$, i.e.\ $D_{\max} (\rho \| \sigma^{*}) = D_{\max} (\rho \| S)$.
Then we can show that
\begin{align}
\frac{1}{2} D_{\max} (\rho \| S)
+ \frac{1}{2} D_{\max} (\sigma^* \| S)
<
D_{\max} \pqty{\left. \frac{1}{2} \rho + \frac{1}{2} \sigma^{*} \right\| S}
\end{align}
due to the strict concavity of $\log$.
Proposition~\ref{proposition:convex} then implies that there must be a catalytic advantage.

This argument provides a general motif to identifying catalytic effects in the asymptotic regime.
In fact, it explains our first example with PPT entanglement, with an explicit calculation showing that for any Werner state, its logarithmic negativity is equal to $D_{\max} (\rho \| \PPT)$.

\section{Discussions}

We have shown that catalysis can provide an advantage in the asymptotic regime, namely, in the task of exact entanglement dilution.
We showed that the advantage in this task is closely related to the mathematical property of non-convexity of the exact entanglement cost.
This suggests a general approach to finding catalytic advantage in the asymptotic manipulation of other quantum resources.

Several interesting questions remain open.
One interesting question is to find a similar advantage in the approximate setting, where an asymptotically vanishing error is allowed in the transformation.
We can extend Proposition~\ref{proposition:cost-upper-bound} to this setting without much difficulty.
However, it is known that the approximate cost is always convex, prohibiting the application of Proposition~\ref{proposition:convex}.
This is because of the existence of an asymptotic mixing protocol~\cite{arxiv_Vidal_2002}, i.e.\ it is possible to transform $\rho^{\otimes n} \otimes \sigma^{\otimes n}$ to $\pqty{ \frac{\rho + \sigma}{2} }^{\otimes 2n}$ asymptotically, if a vanishing error is allowed.
Thus, more work needs to be done to find an example.

Our example of a catalytic advantage relies on the fact that the exact entanglement cost is non-convex, while the catalytic exact entanglement cost is upper bounded by the convex hull of the exact entanglement cost.
Thus, convexity might be the key distinction between these two measures.
An obvious question is: is the catalytic exact entanglement cost a convex measure?

A common procedure to obtain asymptotic entanglement measures is the notion of regularization;
for a measure $E(\rho)$, the regularized measure is
\begin{align}
	E^{\infty}(\rho)
	&=
	\lim_{n \to \infty}
	\frac{1}{n} E(\rho^{\otimes n}).
\end{align}
We can use the notion of $n$-copy broadcast and the methods in Ref.~\cite{Ganardi_2024} to obtain a converse to our results, namely the following lower bound on catalytic exact entanglement cost
\begin{align}
	E_{c,c}^0 (\rho)
	\geq
	\lim_{n \to \infty}
	\inf_{\mu \in \mathfrak{B}_n (\rho)}
	\frac{1}{n} E_{c}^0 (\mu).
\end{align}
Here, the right hand side is akin to a regularized quantity, apart from the infimum over all $n$-copy broadcasts.
This is the notion of \emph{broadcast regularization}~\cite{Piani_2009b}.
Comparing to the upper bound in Proposition~\ref{proposition:cost-upper-bound}, this suggests that the exact catalytic cost is closely connected to broadcast-regularized quantities.
We have seen that for logarithmic negativity and max-relative entropy of resource, broadcast regularization produces a different monotone compared to the usual regularization.
It would be interesting to see whether this also holds for more well-behaved measures such as the relative entropy of entanglement.

More generally, our results paint a more optimistic picture of the real cost of entanglement manipulation tasks.
In addition to significantly increasing the power of single-shot transformations, we have discovered that asymptotic transformations can also benefit from catalysis.
Indeed, for the state $\rho = \frac{1}{2} \Phi_d + \frac{1}{2} \frac{\mathbf{1}}{d^2}$, catalysis lowers the entanglement cost by at least a factor of two, compared to the standard asymptotic setting.
This suggests that we might obtain significantly higher transformation rates in experiments if we allow for alternative settings such as catalysis.

\begin{acknowledgments}
RG thanks Samrat Sen for the discussion on exact LOCC cost.
RG acknowledges financial support from the European Union (ERC StG ETQO, Grant Agreement no.\ 101165230). Views and opinions expressed are however those of the author(s) only and do not necessarily reflect those of the European Union or the European Research Council. Neither the European Union nor the granting authority can be held responsible for them.
\end{acknowledgments}

\bibliography{refs.bib}

%apsrev4-2.bst 2019-01-14 (MD) hand-edited version of apsrev4-1.bst
%Control: key (0)
%Control: author (8) initials jnrlst
%Control: editor formatted (1) identically to author
%Control: production of article title (0) allowed
%Control: page (0) single
%Control: year (1) truncated
%Control: production of eprint (0) enabled
\begin{thebibliography}{31}%
\makeatletter
\providecommand \@ifxundefined [1]{%
 \@ifx{#1\undefined}
}%
\providecommand \@ifnum [1]{%
 \ifnum #1\expandafter \@firstoftwo
 \else \expandafter \@secondoftwo
 \fi
}%
\providecommand \@ifx [1]{%
 \ifx #1\expandafter \@firstoftwo
 \else \expandafter \@secondoftwo
 \fi
}%
\providecommand \natexlab [1]{#1}%
\providecommand \enquote  [1]{``#1''}%
\providecommand \bibnamefont  [1]{#1}%
\providecommand \bibfnamefont [1]{#1}%
\providecommand \citenamefont [1]{#1}%
\providecommand \href@noop [0]{\@secondoftwo}%
\providecommand \href [0]{\begingroup \@sanitize@url \@href}%
\providecommand \@href[1]{\@@startlink{#1}\@@href}%
\providecommand \@@href[1]{\endgroup#1\@@endlink}%
\providecommand \@sanitize@url [0]{\catcode `\\12\catcode `\$12\catcode
  `\&12\catcode `\#12\catcode `\^12\catcode `\_12\catcode `\%12\relax}%
\providecommand \@@startlink[1]{}%
\providecommand \@@endlink[0]{}%
\providecommand \url  [0]{\begingroup\@sanitize@url \@url }%
\providecommand \@url [1]{\endgroup\@href {#1}{\urlprefix }}%
\providecommand \urlprefix  [0]{URL }%
\providecommand \Eprint [0]{\href }%
\providecommand \doibase [0]{https://doi.org/}%
\providecommand \selectlanguage [0]{\@gobble}%
\providecommand \bibinfo  [0]{\@secondoftwo}%
\providecommand \bibfield  [0]{\@secondoftwo}%
\providecommand \translation [1]{[#1]}%
\providecommand \BibitemOpen [0]{}%
\providecommand \bibitemStop [0]{}%
\providecommand \bibitemNoStop [0]{.\EOS\space}%
\providecommand \EOS [0]{\spacefactor3000\relax}%
\providecommand \BibitemShut  [1]{\csname bibitem#1\endcsname}%
\let\auto@bib@innerbib\@empty
%</preamble>
\bibitem [{\citenamefont {Horodecki}\ \emph {et~al.}(2009)\citenamefont
  {Horodecki}, \citenamefont {Horodecki}, \citenamefont {Horodecki},\ and\
  \citenamefont {Horodecki}}]{Horodecki_2009a}%
  \BibitemOpen
  \bibfield  {author} {\bibinfo {author} {\bibfnamefont {R.}~\bibnamefont
  {Horodecki}}, \bibinfo {author} {\bibfnamefont {P.}~\bibnamefont
  {Horodecki}}, \bibinfo {author} {\bibfnamefont {M.}~\bibnamefont
  {Horodecki}},\ and\ \bibinfo {author} {\bibfnamefont {K.}~\bibnamefont
  {Horodecki}},\ }\bibfield  {title} {\bibinfo {title} {Quantum entanglement},\
  }\href {https://doi.org/10.1103/RevModPhys.81.865} {\bibfield  {journal}
  {\bibinfo  {journal} {Rev. Mod. Phys.}\ }\textbf {\bibinfo {volume} {81}},\
  \bibinfo {pages} {865} (\bibinfo {year} {2009})}\BibitemShut {NoStop}%
\bibitem [{\citenamefont {Jonathan}\ and\ \citenamefont
  {Plenio}(1999)}]{Jonathan_1999}%
  \BibitemOpen
  \bibfield  {author} {\bibinfo {author} {\bibfnamefont {D.}~\bibnamefont
  {Jonathan}}\ and\ \bibinfo {author} {\bibfnamefont {M.~B.}\ \bibnamefont
  {Plenio}},\ }\bibfield  {title} {\bibinfo {title} {Entanglement-assisted
  local manipulation of pure quantum states},\ }\href
  {https://doi.org/10.1103/physrevlett.83.3566} {\bibfield  {journal} {\bibinfo
   {journal} {Physical Review Letters}\ }\textbf {\bibinfo {volume} {83}},\
  \bibinfo {pages} {3566} (\bibinfo {year} {1999})}\BibitemShut {NoStop}%
\bibitem [{\citenamefont {Datta}\ \emph {et~al.}(2023)\citenamefont {Datta},
  \citenamefont {Varun~Kondra}, \citenamefont {Miller},\ and\ \citenamefont
  {Streltsov}}]{Datta_2023a}%
  \BibitemOpen
  \bibfield  {author} {\bibinfo {author} {\bibfnamefont {C.}~\bibnamefont
  {Datta}}, \bibinfo {author} {\bibfnamefont {T.}~\bibnamefont {Varun~Kondra}},
  \bibinfo {author} {\bibfnamefont {M.}~\bibnamefont {Miller}},\ and\ \bibinfo
  {author} {\bibfnamefont {A.}~\bibnamefont {Streltsov}},\ }\bibfield  {title}
  {\bibinfo {title} {Catalysis of entanglement and other quantum resources},\
  }\href {https://doi.org/10.1088/1361-6633/acfbec} {\bibfield  {journal}
  {\bibinfo  {journal} {Reports on Progress in Physics}\ }\textbf {\bibinfo
  {volume} {86}},\ \bibinfo {pages} {116002} (\bibinfo {year}
  {2023})}\BibitemShut {NoStop}%
\bibitem [{\citenamefont {Lipka-Bartosik}\ \emph {et~al.}(2024)\citenamefont
  {Lipka-Bartosik}, \citenamefont {Wilming},\ and\ \citenamefont
  {Ng}}]{Lipka-Bartosik_2024a}%
  \BibitemOpen
  \bibfield  {author} {\bibinfo {author} {\bibfnamefont {P.}~\bibnamefont
  {Lipka-Bartosik}}, \bibinfo {author} {\bibfnamefont {H.}~\bibnamefont
  {Wilming}},\ and\ \bibinfo {author} {\bibfnamefont {N.~H.~Y.}\ \bibnamefont
  {Ng}},\ }\bibfield  {title} {\bibinfo {title} {Catalysis in quantum
  information theory},\ }\href {https://doi.org/10.1103/revmodphys.96.025005}
  {\bibfield  {journal} {\bibinfo  {journal} {Reviews of Modern Physics}\
  }\textbf {\bibinfo {volume} {96}},\ \bibinfo {pages} {025005} (\bibinfo
  {year} {2024})}\BibitemShut {NoStop}%
\bibitem [{\citenamefont {Duan}\ \emph {et~al.}(2005)\citenamefont {Duan},
  \citenamefont {Feng}, \citenamefont {Li},\ and\ \citenamefont
  {Ying}}]{Duan_2005}%
  \BibitemOpen
  \bibfield  {author} {\bibinfo {author} {\bibfnamefont {R.}~\bibnamefont
  {Duan}}, \bibinfo {author} {\bibfnamefont {Y.}~\bibnamefont {Feng}}, \bibinfo
  {author} {\bibfnamefont {X.}~\bibnamefont {Li}},\ and\ \bibinfo {author}
  {\bibfnamefont {M.}~\bibnamefont {Ying}},\ }\bibfield  {title} {\bibinfo
  {title} {Multiple-copy entanglement transformation and entanglement
  catalysis},\ }\href {https://doi.org/10.1103/physreva.71.042319} {\bibfield
  {journal} {\bibinfo  {journal} {Physical Review A}\ }\textbf {\bibinfo
  {volume} {71}},\ \bibinfo {pages} {042319} (\bibinfo {year}
  {2005})}\BibitemShut {NoStop}%
\bibitem [{\citenamefont {Ganardi}\ \emph {et~al.}(2024)\citenamefont
  {Ganardi}, \citenamefont {Kondra},\ and\ \citenamefont
  {Streltsov}}]{Ganardi_2024}%
  \BibitemOpen
  \bibfield  {author} {\bibinfo {author} {\bibfnamefont {R.}~\bibnamefont
  {Ganardi}}, \bibinfo {author} {\bibfnamefont {T.~V.}\ \bibnamefont
  {Kondra}},\ and\ \bibinfo {author} {\bibfnamefont {A.}~\bibnamefont
  {Streltsov}},\ }\bibfield  {title} {\bibinfo {title} {Catalytic and
  asymptotic equivalence for quantum entanglement},\ }\href
  {https://doi.org/10.1103/physrevlett.133.250201} {\bibfield  {journal}
  {\bibinfo  {journal} {Physical Review Letters}\ }\textbf {\bibinfo {volume}
  {133}},\ \bibinfo {pages} {250201} (\bibinfo {year} {2024})}\BibitemShut
  {NoStop}%
\bibitem [{\citenamefont {Müller}(2018)}]{Müller_2018}%
  \BibitemOpen
  \bibfield  {author} {\bibinfo {author} {\bibfnamefont {M.~P.}\ \bibnamefont
  {Müller}},\ }\bibfield  {title} {\bibinfo {title} {Correlating thermal
  machines and the second law at the nanoscale},\ }\href
  {https://doi.org/10.1103/physrevx.8.041051} {\bibfield  {journal} {\bibinfo
  {journal} {Physical Review X}\ }\textbf {\bibinfo {volume} {8}},\ \bibinfo
  {pages} {041051} (\bibinfo {year} {2018})}\BibitemShut {NoStop}%
\bibitem [{\citenamefont {Kondra}\ \emph {et~al.}(2021)\citenamefont {Kondra},
  \citenamefont {Datta},\ and\ \citenamefont {Streltsov}}]{Kondra_2021}%
  \BibitemOpen
  \bibfield  {author} {\bibinfo {author} {\bibfnamefont {T.~V.}\ \bibnamefont
  {Kondra}}, \bibinfo {author} {\bibfnamefont {C.}~\bibnamefont {Datta}},\ and\
  \bibinfo {author} {\bibfnamefont {A.}~\bibnamefont {Streltsov}},\ }\bibfield
  {title} {\bibinfo {title} {Catalytic transformations of pure entangled
  states},\ }\href {https://doi.org/10.1103/physrevlett.127.150503} {\bibfield
  {journal} {\bibinfo  {journal} {Physical Review Letters}\ }\textbf {\bibinfo
  {volume} {127}},\ \bibinfo {pages} {150503} (\bibinfo {year}
  {2021})}\BibitemShut {NoStop}%
\bibitem [{\citenamefont {Shiraishi}\ and\ \citenamefont
  {Sagawa}(2021)}]{Shiraishi_2021}%
  \BibitemOpen
  \bibfield  {author} {\bibinfo {author} {\bibfnamefont {N.}~\bibnamefont
  {Shiraishi}}\ and\ \bibinfo {author} {\bibfnamefont {T.}~\bibnamefont
  {Sagawa}},\ }\bibfield  {title} {\bibinfo {title} {Quantum thermodynamics of
  correlated-catalytic state conversion at small scale},\ }\href
  {https://doi.org/10.1103/physrevlett.126.150502} {\bibfield  {journal}
  {\bibinfo  {journal} {Physical Review Letters}\ }\textbf {\bibinfo {volume}
  {126}},\ \bibinfo {pages} {150502} (\bibinfo {year} {2021})}\BibitemShut
  {NoStop}%
\bibitem [{\citenamefont {Wilming}(2021)}]{Wilming_2021}%
  \BibitemOpen
  \bibfield  {author} {\bibinfo {author} {\bibfnamefont {H.}~\bibnamefont
  {Wilming}},\ }\bibfield  {title} {\bibinfo {title} {Entropy and reversible
  catalysis},\ }\href {https://doi.org/10.1103/physrevlett.127.260402}
  {\bibfield  {journal} {\bibinfo  {journal} {Physical Review Letters}\
  }\textbf {\bibinfo {volume} {127}},\ \bibinfo {pages} {260402} (\bibinfo
  {year} {2021})}\BibitemShut {NoStop}%
\bibitem [{\citenamefont {Lipka-Bartosik}\ \emph {et~al.}(2021)\citenamefont
  {Lipka-Bartosik}, \citenamefont {Mazurek},\ and\ \citenamefont
  {Horodecki}}]{Lipka-Bartosik_2021}%
  \BibitemOpen
  \bibfield  {author} {\bibinfo {author} {\bibfnamefont {P.}~\bibnamefont
  {Lipka-Bartosik}}, \bibinfo {author} {\bibfnamefont {P.}~\bibnamefont
  {Mazurek}},\ and\ \bibinfo {author} {\bibfnamefont {M.}~\bibnamefont
  {Horodecki}},\ }\bibfield  {title} {\bibinfo {title} {Second law of
  thermodynamics for batteries with vacuum state},\ }\href
  {https://doi.org/10.22331/q-2021-03-10-408} {\bibfield  {journal} {\bibinfo
  {journal} {Quantum}\ }\textbf {\bibinfo {volume} {5}},\ \bibinfo {pages}
  {408} (\bibinfo {year} {2021})}\BibitemShut {NoStop}%
\bibitem [{\citenamefont {Kondra}\ \emph {et~al.}(2024)\citenamefont {Kondra},
  \citenamefont {Ganardi},\ and\ \citenamefont {Streltsov}}]{Kondra_2024}%
  \BibitemOpen
  \bibfield  {author} {\bibinfo {author} {\bibfnamefont {T.~V.}\ \bibnamefont
  {Kondra}}, \bibinfo {author} {\bibfnamefont {R.}~\bibnamefont {Ganardi}},\
  and\ \bibinfo {author} {\bibfnamefont {A.}~\bibnamefont {Streltsov}},\
  }\bibfield  {title} {\bibinfo {title} {Coherence manipulation in asymmetry
  and thermodynamics},\ }\href {https://doi.org/10.1103/physrevlett.132.200201}
  {\bibfield  {journal} {\bibinfo  {journal} {Physical Review Letters}\
  }\textbf {\bibinfo {volume} {132}},\ \bibinfo {pages} {200201} (\bibinfo
  {year} {2024})}\BibitemShut {NoStop}%
\bibitem [{\citenamefont {Shiraishi}\ and\ \citenamefont
  {Takagi}(2024)}]{Shiraishi_2024}%
  \BibitemOpen
  \bibfield  {author} {\bibinfo {author} {\bibfnamefont {N.}~\bibnamefont
  {Shiraishi}}\ and\ \bibinfo {author} {\bibfnamefont {R.}~\bibnamefont
  {Takagi}},\ }\bibfield  {title} {\bibinfo {title} {Arbitrary amplification of
  quantum coherence in asymptotic and catalytic transformation},\ }\href
  {https://doi.org/10.1103/physrevlett.132.180202} {\bibfield  {journal}
  {\bibinfo  {journal} {Physical Review Letters}\ }\textbf {\bibinfo {volume}
  {132}},\ \bibinfo {pages} {180202} (\bibinfo {year} {2024})}\BibitemShut
  {NoStop}%
\bibitem [{\citenamefont {Ding}\ \emph {et~al.}(2021)\citenamefont {Ding},
  \citenamefont {Hu},\ and\ \citenamefont {Fan}}]{Ding_2021}%
  \BibitemOpen
  \bibfield  {author} {\bibinfo {author} {\bibfnamefont {F.}~\bibnamefont
  {Ding}}, \bibinfo {author} {\bibfnamefont {X.}~\bibnamefont {Hu}},\ and\
  \bibinfo {author} {\bibfnamefont {H.}~\bibnamefont {Fan}},\ }\bibfield
  {title} {\bibinfo {title} {Amplifying asymmetry with correlating catalysts},\
  }\href {https://doi.org/10.1103/physreva.103.022403} {\bibfield  {journal}
  {\bibinfo  {journal} {Physical Review A}\ }\textbf {\bibinfo {volume}
  {103}},\ \bibinfo {pages} {022403} (\bibinfo {year} {2021})}\BibitemShut
  {NoStop}%
\bibitem [{\citenamefont {Lami}\ \emph {et~al.}(2024)\citenamefont {Lami},
  \citenamefont {Regula},\ and\ \citenamefont {Streltsov}}]{Lami_2024}%
  \BibitemOpen
  \bibfield  {author} {\bibinfo {author} {\bibfnamefont {L.}~\bibnamefont
  {Lami}}, \bibinfo {author} {\bibfnamefont {B.}~\bibnamefont {Regula}},\ and\
  \bibinfo {author} {\bibfnamefont {A.}~\bibnamefont {Streltsov}},\ }\bibfield
  {title} {\bibinfo {title} {No-go theorem for entanglement distillation using
  catalysis},\ }\href {https://doi.org/10.1103/physreva.109.l050401} {\bibfield
   {journal} {\bibinfo  {journal} {Physical Review A}\ }\textbf {\bibinfo
  {volume} {109}},\ \bibinfo {pages} {L050401} (\bibinfo {year}
  {2024})}\BibitemShut {NoStop}%
\bibitem [{\citenamefont {Peres}(1996)}]{Peres_1996}%
  \BibitemOpen
  \bibfield  {author} {\bibinfo {author} {\bibfnamefont {A.}~\bibnamefont
  {Peres}},\ }\bibfield  {title} {\bibinfo {title} {Separability criterion for
  density matrices},\ }\href {https://doi.org/10.1103/physrevlett.77.1413}
  {\bibfield  {journal} {\bibinfo  {journal} {Physical Review Letters}\
  }\textbf {\bibinfo {volume} {77}},\ \bibinfo {pages} {1413} (\bibinfo {year}
  {1996})}\BibitemShut {NoStop}%
\bibitem [{\citenamefont {Rains}(1999)}]{Rains_1999}%
  \BibitemOpen
  \bibfield  {author} {\bibinfo {author} {\bibfnamefont {E.~M.}\ \bibnamefont
  {Rains}},\ }\bibfield  {title} {\bibinfo {title} {Bound on distillable
  entanglement},\ }\href {https://doi.org/10.1103/physreva.60.179} {\bibfield
  {journal} {\bibinfo  {journal} {Physical Review A}\ }\textbf {\bibinfo
  {volume} {60}},\ \bibinfo {pages} {179} (\bibinfo {year} {1999})}\BibitemShut
  {NoStop}%
\bibitem [{\citenamefont {Wang}\ and\ \citenamefont {Duan}(2017)}]{Wang_2017}%
  \BibitemOpen
  \bibfield  {author} {\bibinfo {author} {\bibfnamefont {X.}~\bibnamefont
  {Wang}}\ and\ \bibinfo {author} {\bibfnamefont {R.}~\bibnamefont {Duan}},\
  }\bibfield  {title} {\bibinfo {title} {Irreversibility of asymptotic
  entanglement manipulation under quantum operations completely preserving
  positivity of partial transpose},\ }\href
  {https://doi.org/10.1103/physrevlett.119.180506} {\bibfield  {journal}
  {\bibinfo  {journal} {Physical Review Letters}\ }\textbf {\bibinfo {volume}
  {119}},\ \bibinfo {pages} {180506} (\bibinfo {year} {2017})}\BibitemShut
  {NoStop}%
\bibitem [{\citenamefont {Audenaert}\ \emph {et~al.}(2003)\citenamefont
  {Audenaert}, \citenamefont {Plenio},\ and\ \citenamefont
  {Eisert}}]{Audenaert_2003}%
  \BibitemOpen
  \bibfield  {author} {\bibinfo {author} {\bibfnamefont {K.}~\bibnamefont
  {Audenaert}}, \bibinfo {author} {\bibfnamefont {M.~B.}\ \bibnamefont
  {Plenio}},\ and\ \bibinfo {author} {\bibfnamefont {J.}~\bibnamefont
  {Eisert}},\ }\bibfield  {title} {\bibinfo {title} {Entanglement cost under
  positive-partial-transpose-preserving operations},\ }\href
  {https://doi.org/10.1103/physrevlett.90.027901} {\bibfield  {journal}
  {\bibinfo  {journal} {Physical Review Letters}\ }\textbf {\bibinfo {volume}
  {90}},\ \bibinfo {pages} {027901} (\bibinfo {year} {2003})}\BibitemShut
  {NoStop}%
\bibitem [{\citenamefont {Plenio}(2005)}]{Plenio_2005}%
  \BibitemOpen
  \bibfield  {author} {\bibinfo {author} {\bibfnamefont {M.~B.}\ \bibnamefont
  {Plenio}},\ }\bibfield  {title} {\bibinfo {title} {Logarithmic negativity: A
  full entanglement monotone that is not convex},\ }\href
  {https://doi.org/10.1103/physrevlett.95.090503} {\bibfield  {journal}
  {\bibinfo  {journal} {Physical Review Letters}\ }\textbf {\bibinfo {volume}
  {95}},\ \bibinfo {pages} {090503} (\bibinfo {year} {2005})}\BibitemShut
  {NoStop}%
\bibitem [{\citenamefont {Feng}\ \emph {et~al.}(2006)\citenamefont {Feng},
  \citenamefont {Duan},\ and\ \citenamefont {Ying}}]{Feng_2006}%
  \BibitemOpen
  \bibfield  {author} {\bibinfo {author} {\bibfnamefont {Y.}~\bibnamefont
  {Feng}}, \bibinfo {author} {\bibfnamefont {R.}~\bibnamefont {Duan}},\ and\
  \bibinfo {author} {\bibfnamefont {M.}~\bibnamefont {Ying}},\ }\bibfield
  {title} {\bibinfo {title} {Relation between catalyst-assisted transformation
  and multiple-copy transformation for bipartite pure states},\ }\href
  {https://doi.org/10.1103/physreva.74.042312} {\bibfield  {journal} {\bibinfo
  {journal} {Physical Review A}\ }\textbf {\bibinfo {volume} {74}},\ \bibinfo
  {pages} {042312} (\bibinfo {year} {2006})}\BibitemShut {NoStop}%
\bibitem [{\citenamefont {Piani}\ \emph {et~al.}(2009)\citenamefont {Piani},
  \citenamefont {Christandl}, \citenamefont {Mora},\ and\ \citenamefont
  {Horodecki}}]{Piani_2009b}%
  \BibitemOpen
  \bibfield  {author} {\bibinfo {author} {\bibfnamefont {M.}~\bibnamefont
  {Piani}}, \bibinfo {author} {\bibfnamefont {M.}~\bibnamefont {Christandl}},
  \bibinfo {author} {\bibfnamefont {C.~E.}\ \bibnamefont {Mora}},\ and\
  \bibinfo {author} {\bibfnamefont {P.}~\bibnamefont {Horodecki}},\ }\bibfield
  {title} {\bibinfo {title} {Broadcast copies reveal the quantumness of
  correlations},\ }\href {https://doi.org/10.1103/physrevlett.102.250503}
  {\bibfield  {journal} {\bibinfo  {journal} {Physical Review Letters}\
  }\textbf {\bibinfo {volume} {102}},\ \bibinfo {pages} {250503} (\bibinfo
  {year} {2009})}\BibitemShut {NoStop}%
\bibitem [{\citenamefont {Terhal}\ and\ \citenamefont
  {Horodecki}(2000)}]{Terhal_2000a}%
  \BibitemOpen
  \bibfield  {author} {\bibinfo {author} {\bibfnamefont {B.~M.}\ \bibnamefont
  {Terhal}}\ and\ \bibinfo {author} {\bibfnamefont {P.}~\bibnamefont
  {Horodecki}},\ }\bibfield  {title} {\bibinfo {title} {Schmidt number for
  density matrices},\ }\href {https://doi.org/10.1103/physreva.61.040301}
  {\bibfield  {journal} {\bibinfo  {journal} {Physical Review A}\ }\textbf
  {\bibinfo {volume} {61}},\ \bibinfo {pages} {040301} (\bibinfo {year}
  {2000})}\BibitemShut {NoStop}%
\bibitem [{\citenamefont {Jensen}(1906)}]{Jensen_1906}%
  \BibitemOpen
  \bibfield  {author} {\bibinfo {author} {\bibfnamefont {J.~L. W.~V.}\
  \bibnamefont {Jensen}},\ }\bibfield  {title} {\bibinfo {title} {Sur les
  fonctions convexes et les inégalités entre les valeurs moyennes},\ }\href
  {https://doi.org/10.1007/bf02418571} {\bibfield  {journal} {\bibinfo
  {journal} {Acta Mathematica}\ }\textbf {\bibinfo {volume} {30}},\ \bibinfo
  {pages} {175–193} (\bibinfo {year} {1906})}\BibitemShut {NoStop}%
\bibitem [{\citenamefont {Janzing}\ \emph {et~al.}(2000)\citenamefont
  {Janzing}, \citenamefont {Wocjan}, \citenamefont {Zeier}, \citenamefont
  {Geiss},\ and\ \citenamefont {Beth}}]{Janzing_2000}%
  \BibitemOpen
  \bibfield  {author} {\bibinfo {author} {\bibfnamefont {D.}~\bibnamefont
  {Janzing}}, \bibinfo {author} {\bibfnamefont {P.}~\bibnamefont {Wocjan}},
  \bibinfo {author} {\bibfnamefont {R.}~\bibnamefont {Zeier}}, \bibinfo
  {author} {\bibfnamefont {R.}~\bibnamefont {Geiss}},\ and\ \bibinfo {author}
  {\bibfnamefont {T.}~\bibnamefont {Beth}},\ }\bibfield  {title} {\bibinfo
  {title} {Thermodynamic cost of reliability and low temperatures: Tightening
  landauer's principle and the second law},\ }\href
  {https://doi.org/10.1023/a:1026422630734} {\bibfield  {journal} {\bibinfo
  {journal} {International Journal of Theoretical Physics}\ }\textbf {\bibinfo
  {volume} {39}},\ \bibinfo {pages} {2717} (\bibinfo {year}
  {2000})}\BibitemShut {NoStop}%
\bibitem [{\citenamefont {Horodecki}\ and\ \citenamefont
  {Oppenheim}(2013)}]{Horodecki_2013}%
  \BibitemOpen
  \bibfield  {author} {\bibinfo {author} {\bibfnamefont {M.}~\bibnamefont
  {Horodecki}}\ and\ \bibinfo {author} {\bibfnamefont {J.}~\bibnamefont
  {Oppenheim}},\ }\bibfield  {title} {\bibinfo {title} {Fundamental limitations
  for quantum and nanoscale thermodynamics},\ }\href
  {https://doi.org/10.1038/ncomms3059} {\bibfield  {journal} {\bibinfo
  {journal} {Nature Communications}\ }\textbf {\bibinfo {volume} {4}},\
  \bibinfo {pages} {2059} (\bibinfo {year} {2013})}\BibitemShut {NoStop}%
\bibitem [{\citenamefont {Lostaglio}(2019)}]{Lostaglio_2019a}%
  \BibitemOpen
  \bibfield  {author} {\bibinfo {author} {\bibfnamefont {M.}~\bibnamefont
  {Lostaglio}},\ }\bibfield  {title} {\bibinfo {title} {An introductory review
  of the resource theory approach to thermodynamics},\ }\href
  {https://doi.org/10.1088/1361-6633/ab46e5} {\bibfield  {journal} {\bibinfo
  {journal} {Reports on Progress in Physics}\ }\textbf {\bibinfo {volume}
  {82}},\ \bibinfo {pages} {114001} (\bibinfo {year} {2019})}\BibitemShut
  {NoStop}%
\bibitem [{\citenamefont {Ng}\ and\ \citenamefont
  {Woods}(2018)}]{Nelly_Ng_2018}%
  \BibitemOpen
  \bibfield  {author} {\bibinfo {author} {\bibfnamefont {N.~H.~Y.}\
  \bibnamefont {Ng}}\ and\ \bibinfo {author} {\bibfnamefont {M.~P.}\
  \bibnamefont {Woods}},\ }\bibinfo {title} {Resource theory of quantum
  thermodynamics: Thermal operations and second laws},\ in\ \href
  {https://doi.org/10.1007/978-3-319-99046-0_26} {\emph {\bibinfo {booktitle}
  {Thermodynamics in the Quantum Regime}}}\ (\bibinfo  {publisher} {Springer
  International Publishing},\ \bibinfo {year} {2018})\ p.\ \bibinfo {pages}
  {625–650}\BibitemShut {NoStop}%
\bibitem [{\citenamefont {Datta}(2009)}]{Datta_2009}%
  \BibitemOpen
  \bibfield  {author} {\bibinfo {author} {\bibfnamefont {N.}~\bibnamefont
  {Datta}},\ }\bibfield  {title} {\bibinfo {title} {Min- and max-relative
  entropies and a new entanglement monotone},\ }\href
  {https://doi.org/10.1109/tit.2009.2018325} {\bibfield  {journal} {\bibinfo
  {journal} {IEEE Transactions on Information Theory}\ }\textbf {\bibinfo
  {volume} {55}},\ \bibinfo {pages} {2816} (\bibinfo {year}
  {2009})}\BibitemShut {NoStop}%
\bibitem [{\citenamefont {Chitambar}\ and\ \citenamefont
  {Gour}(2019)}]{Chitambar_2019}%
  \BibitemOpen
  \bibfield  {author} {\bibinfo {author} {\bibfnamefont {E.}~\bibnamefont
  {Chitambar}}\ and\ \bibinfo {author} {\bibfnamefont {G.}~\bibnamefont
  {Gour}},\ }\bibfield  {title} {\bibinfo {title} {Quantum resource theories},\
  }\href {https://doi.org/10.1103/revmodphys.91.025001} {\bibfield  {journal}
  {\bibinfo  {journal} {Reviews of Modern Physics}\ }\textbf {\bibinfo {volume}
  {91}},\ \bibinfo {pages} {025001} (\bibinfo {year} {2019})}\BibitemShut
  {NoStop}%
\bibitem [{\citenamefont {Vidal}(2002)}]{arxiv_Vidal_2002}%
  \BibitemOpen
  \bibfield  {author} {\bibinfo {author} {\bibfnamefont {G.}~\bibnamefont
  {Vidal}},\ }\href {https://arxiv.org/abs/quant-ph/0203107} {\bibinfo {title}
  {On the continuity of asymptotic measures of entanglement}} (\bibinfo {year}
  {2002}),\ \Eprint {https://arxiv.org/abs/quant-ph/0203107}
  {arXiv:quant-ph/0203107 [quant-ph]} \BibitemShut {NoStop}%
\end{thebibliography}%

\end{document}